\newtheorem{theorem}{Theorem}[section]
\newtheorem{proposition}[theorem]{Proposition}
\begin{document}

\title{Flag-based Control of Quantum Purity for $n=2$ Systems}

\author{Patrick Rooney}
\email{darraghrooney@gmail.com}
\noaffiliation

\author{Anthony M. Bloch}
\email{abloch@umich.edu}
\affiliation{Department of Mathematics, University of Michigan, Ann Arbor, MI 48109}

\author{C. Rangan}
\email{rangan@uwindsor.ca}
\affiliation{Department of Physics, University of Windsor, ON,
N9B 3P4, Canada}

\begin{abstract}
This paper investigates the fast Hamiltonian control of $n=2$ density operators by continuously varying the flag as one moves away from the completely mixed state. In general, the critical points and zeros of the purity derivative can only be solved analytically in the limit of minimal purity. We derive differential equations that maintain these features as the purity increases. In particular, there is a thread of points in the Bloch ball that locally maximizes the purity derivative, and a corresponding thread that minimizes it. Additionally, we show there is a closed surface of points inside of which the purity derivative is positive, and inside of which is negative. We argue that this approach may be useful in studying higher-dimensional systems.
\end{abstract}

\keywords{quantum control, open systems, Lindblad equation, decoherence, dissipation}

\maketitle

\section{Introduction}

In the last three decades, there has been great interest in controlling quantum systems for the purposes of coherent control of chemical reactions \cite{ShapiroBrumer86}\cite{TannorRice85}, NMR \cite{ErnstetalBook}, and quantum computation \cite{RanganBucksbaum01}\cite{PalaoKosloff02}. One of the key challenges of quantum control is counter-acting the influence of the environment, which causes decoherence (loss of coherence between quantum states) and dissipation (loss of energy) (see \cite{MabuchiKhaneja2005}, \cite{BrifChakrabartiRabitz2010}, \cite{DongPetersen2011} and \cite{AltafiniTicozzi2012} for surveys). If one models an open system as Markovian and time-independent, the dynamics are described by a quantum dynamical semi-group and the Lindblad master equation \cite{Lindblad76}\cite{GoriniKossakowskiSudarshan76}\cite{BreuerPetruccioneBook}. While there is research towards engineering open system dynamics\cite{LloydViola01}\cite{Baconetal01}\cite{Barreiroetal2011}, control functions often appear in the system Hamiltonian, which are only capable of steering within a given unitary orbit \cite{TannorBartana99}\cite{SklarzTannorKhaneja04}\cite{Schirmeretal04}. The motion \emph{between} orbits depends on the Lindblad super-operator. This includes, in particular, variation in the purity $Tr(\rho^2)$, which is constant on any given orbit. This incentivizes  unitary control that is fast relative to the time-scale of the Lindblad dynamics\cite{Khanejaetal02}.

One method of representing open systems is the generalized Bloch representation \cite{Schirmeretal04}\cite{SchirmerWang2010}, which yields an affine differential equation on the vector space of density operators. In this paper we use a different approach in which the structure of the space of density operators is decomposed into the space of unitary orbits, and the orbit manifolds themselves.  If one has sufficiently fast and complete Hamiltonian control for an $n=2$ system, the inter- and intra-orbit dynamics can be turned into a control equation, where the position along the orbit is considered a control variable, and the orbit itself is treated as a state variable \cite{us_nis2_a}\cite{Yuan2013}. Mathematically, the position along the orbit is a \emph{flag}: a nesting of subspaces, which in this case are the eigenspaces of the density operator. Therefore, we refer to this viewpoint as flag-based control. 

The difficulty in this approach is that the orbit is a non-linear manifold, and applying standard control theory to obtain explicit trajectories is non-trivial. Controllability for $n=2$ can be treated analytically, but not in a way that will scale practically to higher dimensions. At the completely mixed state however, the structure of the Lindblad term simplifies significantly regardless of dimension. This paper considers an approach that begins at the completely mixed state, and introduces a continuously varying feedback in the flag, which maintains critical points and zeros of the purity derivative as the purity increases. In this way, one can plot a thread through $\rho$-space, the so-called Bloch ball, that maximizes (at least locally) the time-derivative of $Tr(\rho^2)$. There is a corresponding thread that minimizes the time-derivative (better said, maximizes it in the negative direction). Additionally, a different feedback can be derived that maintains a surface that separates the Bloch ball into purity-increasing and purity-decreasing regions. The purity derivative on this surface vanishes.

In the section II, we decompose the Lindblad master equation into its inter- and intra-orbit components, and interpret the resulting ODE as a control equation. In section III, we derive a feedback equation that maintains critical points as purity varies. We consider the special cases where the feedback equation fails. In section IV, we derive the feedback that maintains zeros. In section V we show several examples that illustrate the effect of the Lindblad super-operator on the Bloch ball.

\section{Preliminaries}

An open quantum system is described by a density operator $\rho$, which is a trace-one positive semi-definite operator on the Hilbert space. If the dissipation is Markovian and time-independent, the density operator obeys the Lindblad equation \cite{Lindblad76}:
\begin{align}
\frac{d}{dt}\rho(t) = [-iH(t), \rho(t)] + \mathcal{L}_D(\rho(t)) \\
\mathcal{L}_D(\rho) := \sum_{m=1}^N L_m\rho L_m^\dagger - \frac{1}{2}\{L_m^\dagger L_m, \rho \} ,
\end{align}
where the braces indicate an anti-commutator, $H(t)$ is the (Hermitian) Hamiltonian, and $\{L_m\}$ are the so-called Lindblad operators. 

For $n=2$, the density operator can be identified with the Bloch vector $\vec{n} \in \mathbb{R}^3$, $|\vec{n}|\le 1$. The identification is:
\begin{align}
\rho = \frac{1}{2}\left(I_2 + \sum_{j=x,y,z}n_j\sigma_j \right),
\end{align}
where $\{\sigma_j\}$ are the Pauli matrices. The Lindblad equation translates to the following ODE (see Appendix A for a derivation): 
\begin{align}
\frac{d\vec{n}}{dt} = \vec{b} + \vec{h}\times\vec{n} +(A - \textrm{tr}(A))\vec{n}.
\end{align}
Here we write $\vec{h}$ and $\vec{l}_m$ to represent the traceless parts of the operators $H$ and $L_m$, expressed in the basis of the Pauli matrices, so that $H = h_0I +\sum_{j=1}^3 h_j\sigma_j$. The system parameters are defined:
\begin{align}
A &:= \frac{1}{2}\sum_m \vec{l}_m\bar{\vec{l}}_m^T + \bar{\vec{l}}_m\vec{l}_m^T\\
\vec{b} &:= i \sum_m \vec{l}_m \times \bar{\vec{l}}_m,
\end{align}
where the bar represents complex conjugate and $T$ matrix transpose. $A$ is a positive semi-definite matrix, so its eigenvalues $a_j$ must be non-negative. Additionally, the vector $\vec{b}$ obeys the inequality (see Appendix B):
\begin{align}
\vec{b}^T A \vec{b} \le 4 \det(A).\label{ineq}
\end{align}

For $\vec{n}\ne 0$, the ODE can be decoupled into its radial and transverse components. If we write $\vec{n}=r\hat{n}$, then $\frac{d\vec{n}}{dt} = \frac{dr}{dt} \hat{n}+r\frac{d\hat{n}}{dt}$, which yields $\frac{dr}{dt} = \hat{n}\cdot\frac{d\vec{n}}{dt}$, as well as $\frac{d\hat{n}}{dt} = -\frac{1}{r}\hat{n}\times(\hat{n}\times\frac{d\vec{n}}{dt})$. Then we have:
\begin{align}
\frac{dr}{dt} &= \vec{b}\cdot\hat{n} + r\left(\hat{n}\cdot A\hat{n} - \textrm{tr}(A)\right) =: f(\hat{n},r) \label{radial}\\
\frac{d\hat{n}}{dt} &= - \frac{1}{r}\vec{b}_\perp + \vec{h}\times \hat{n} - (A\hat{n})_\perp,  \label{transverse}
\end{align}
where the $\perp$ subscript indicates the component perpendicular to $\hat{n}$. 

The behavior at $r=0$ demands attention. Clearly, $\frac{d\hat{n}}{dt}$  can be quite large for small $r$, but for trajectories that pass through $\vec{n}=0$, it is well behaved. At this point, $\frac{d\vec{n}}{dt} = \vec{b}$, which means that shortly before or after, we have $\vec{n} = \vec{b}\hspace{3pt} \delta t$. It follows that $r = |\vec{b}\hspace{3pt}\delta t|$ and $\hat{n} = \textrm{sgn}(\delta t) \hat{b}$. The ODE's then give:
\begin{align}
\frac{dr}{dt} &= \textrm{sgn}(\delta t)|\vec{b}| + O(\delta t) \\
\frac{d\hat{n}}{dt} &= \textrm{sgn}(\delta t) \left( \vec{h} \times \hat{b} - (A\hat{b})_\perp\right), 
\end{align}
which are clearly bounded.

\section{Maximizing and Minimizing Threads}

In a control-theoretic context, we are typically able to choose the Hamiltonian $H(t)$ to some degree. The Hamiltonian appears only in the transverse equation (\ref{transverse}), while the radial component (\ref{radial}) has no explicit Hamiltonian dependence. We are interested in the question of how to steer the transverse component in order to influence the radial.  We will presume that we have fast and complete controllability, \emph{i.e.}, in the absence of Lindblad dissipation, we are able to steer between any two points on an orbit in arbitrarily short time (or at least in a time-scale much shorter than that associated with the Lindblad operators). This means we can consider $\hat{n}$ to be an effective control variable: we can search for desirable $\hat{n}(t)$ and then re-construct $H(t)$ afterwards. 

Given this context, we are interested in optimizing the function $f_r(\hat{n}) := f(\hat{n},r)$ as $r$ varies. That is, we would like to find the point on an orbit that optimizes the inter-orbit speed. This can be done using the method of Lagrange multipliers  for fixed $r$ \cite{StewartBook}\cite{us_nis2_a}, which gives, for the multiplier $\nu$, the conditions
\begin{align}
\vec{b} + 2rA\hat{n} &= 2\nu \hat{n},
\end{align}
and
\begin{align}
|\hat{n}| &= 1.
\end{align}
This leads in the general case to the degree-six polynomial in $\nu$:
\begin{align}
\sum_{j=1}^3 b_j^2& (\nu-ra_{[j+1]})^2 (\nu-ra_{ [j+2]})^2 - \prod_{j=1}^3 (\nu-ra_j)^2 = 0,\label{six} 
\end{align}
where the square brackets indicate modular addition, so that the indices cycle through $1$, $2$ and $3$.

This approach will not scale up nicely to higher dimensions, since it will involve solving systems of high-degree polynomials. Instead we try a different tack with better scalability. It is easy to analyze $f_0(\hat{n}) = \vec{b}\cdot\hat{n}$. It is clear that $f_0$ is maximized at $\hat{n}_+ := \hat{b}$, minimized at $\hat{n}_- := -\hat{b}$ and zero for any vector perpendicular to $\vec{b}$. Now if we continuously increase $r$ from zero, we investigate whether there are differentiable functions $\hat{n}_\pm(r)$, with feedbacks $\vec{m}_\pm(r) := \frac{d}{dr}\hat{n}_\pm(r) $
such that $\hat{n}_\pm(r)$ are local optima for the functions $f_r(\hat{n})$ for every $r$. If such functions exist, we call the corresponding differentiable curves $\vec{n}_\pm(r) := r\hat{n}_\pm(r)$ maximizing and minimizing \emph{threads}.

We know that $\hat{n}$ lives on the sphere $S^2$, and a tangent space to $S^2$ can be identified with the two-dimensional vector space of vectors perpendicular to $\hat{n}$. The derivative of $f_r$ at a point $\hat{n}$ with respect to a variation $\vec{\epsilon}$ can be written as
\begin{align}
df_r(\hat{n})\cdot\vec{\epsilon} &= \vec{b}^T\vec{\epsilon} + 2r\hat{n}^TA\vec{\epsilon} \\
   &= (\vec{b} + 2rA\hat{n})^T \vec{\epsilon}. 
\end{align}
Note that $df_0(\pm\hat{b})\cdot\vec{\epsilon} = \vec{b}^T\vec{\epsilon} = 0 $, since $\hat{n}$ and $\vec{\epsilon}$ must be orthogonal. It follows that $\hat{n}=\pm\hat{b}$ are critical points of $f_0$, and in fact they are the only two critical points.

Our objective is to vary $\hat{n}$ with $r$ so that it remains a critical point of $f_r$. To this end, we differentiate the equation $df_r(\hat{n})\cdot \vec{\epsilon}=0$ with respect to $r$. We must be careful however since the tangent spaces at each $\hat{n}$ are different. We must also vary $\vec{\epsilon}$ so that it remains perpendicular to $\hat{n}$. Let $\vec{m}:=\frac{d\hat{n}}{dr} $ and $\vec{\mu}:= \frac{d\vec{\epsilon}}{dr}$. Since $\hat{n}(r)\cdot\vec{\epsilon}(r)=0$, $\vec{\mu}$ must always satisfy $\hat{n}\cdot\vec{\mu} = -\vec{m}\cdot\vec{\epsilon}$ (from the product rule). We now have:
\begin{align}
\frac{d}{dr}\left( df_r(\hat{n})\cdot \vec{\epsilon}\right) &= \frac{\partial}{\partial r}\left( df_r(\hat{n})\cdot \vec{\epsilon}\right) +  (2rA\vec{m})^T \vec{\epsilon} \nonumber \\
 &\hspace{1cm} + (\vec{b}+2rA\hat{n})^T \vec{\mu} \\
&= (2A\hat{n} + 2rA\vec{m})^T\vec{\epsilon} + (\vec{b}+2rA\hat{n})^T \vec{\mu}. 
\end{align}
If $\hat{n}$ is a critical point of $f_r$, the vector $\vec{b}+2rA\hat{n}$ is parallel to $\hat{n}$. Let the norm of this vector be $C$. We now have:
\begin{align}
\frac{d}{dr}\left( df_r(\hat{n})\cdot \vec{\epsilon}\right) &=(2A\hat{n} + 2rA\vec{m})^T\vec{\epsilon} + C \hat{n}^T \vec{\mu}\\
&=(2A\hat{n} + 2rA\vec{m})^T\vec{\epsilon} - C \vec{m}^T \vec{\epsilon}\\
&=(2A\hat{n} + \Lambda\vec{m})^T\vec{\epsilon},
\end{align}
where $\Lambda := 2rA-C$. If we want this expression to vanish for arbitrary $\vec{\epsilon}$ perpendicular to $\hat{n}$, we need to have, for some real $k$:
\begin{align}
2A\hat{n} + \Lambda\vec{m} &= k\hat{n} \label{rawfb}\\
\vec{m} &= \Lambda^{-1}(k-2A)\hat{n}, \label{feedback}
\end{align}
where $k$ can be found by projecting both sides on to $\hat{n}$: 
\begin{align}
k &= 2\frac{\hat{n}^T\Lambda^{-1}A\hat{n}}{\hat{n}^T\Lambda^{-1}\hat{n}}.
\end{align}
We can now state the following proposition:
\begin{proposition}
Consider a point $(r_0, \hat{n}_0) \in [0,1]\times S^2$ that is critical, in the sense that $df_{r_0}(\hat{n}_0)\cdot\vec{\epsilon}$ vanishes for any $\vec{\epsilon}$ perpendicular to $\hat{n}_0$. Define:
\begin{align}
k(r,\hat{n}) &:=  2\frac{\hat{n}^T \Lambda(r, \hat{n})^{-1}A\hat{n}}{\hat{n}^T \Lambda(r, \hat{n})^{-1}\hat{n}} \\
\Lambda(r, \hat{n}) &:= 2rA - C(r, \hat{n}) \\
C (r, \hat{n}) &:=  \hat{n}^T ( \vec{b} + 2rA\hat{n}).
\end{align}
If there is an open set $\Omega\subseteq [0,1]\times S^2$ containing $(r_0,\hat{n}_0)$, over which (1) $\Lambda(r,\hat{n})$ is invertible, and (2) $\hat{n}^T\Lambda(r,\hat{n})^{-1}\hat{n} \ne 0$, then the ODE
\begin{align}
\frac{d\hat{n}}{dr} = \Lambda(r,\hat{n})^{-1}(k(r,\hat{n}) - 2A)\hat{n}, 
\end{align}
has a unique solution $\hat{n}(r)$ on some interval $[r_0, r_f)$, and every point on this solution satisfies 
$df_r(\hat{n}(r))\cdot\vec{\epsilon} = 0, $ $\forall\epsilon$ perpendicular to $\hat{n}(r)$. 

If the conditions on $\Lambda(r,\hat{n})$ hold for all points on $[0,1]\times S^2$, there exist two threads, $\hat{n}_\pm (r)$, that satisfy the ODE and have initial conditions $(r_0,\hat{n}_{0\pm}) = (0, \pm\hat{b})$. 
\end{proposition}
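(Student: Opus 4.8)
The plan is to read the displayed ODE as a differential equation on the compact manifold $S^2$ (with $r$ playing the role of time), apply the standard Picard--Lindelöf theory to get local existence and uniqueness, and then verify separately that the criticality condition is propagated along solutions. First I would check that the right-hand side $\Phi(r,\hat n):=\Lambda(r,\hat n)^{-1}(k(r,\hat n)-2A)\hat n$ is a well-defined smooth vector field on the open set $\Omega$: by hypothesis (1) $\Lambda$ is invertible there and matrix inversion is smooth on $\mathrm{GL}_3$, the scalar $C$ is polynomial in $(r,\hat n)$, and by hypothesis (2) the denominator $\hat n^T\Lambda^{-1}\hat n$ of $k$ is nonvanishing, so $k$ and hence $\Phi$ are smooth, in particular locally Lipschitz in $\hat n$. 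I would also note that $\Phi$ is tangent to $S^2$: from the definition of $k$, $\hat n^T\Phi(r,\hat n)=k\,\hat n^T\Lambda^{-1}\hat n-2\hat n^T\Lambda^{-1}A\hat n=0$, so $\frac{d}{dr}\|\hat n\|^2=2\hat n^T\Phi=0$ and solutions starting on $S^2$ stay on $S^2$. Picard--Lindelöf then gives, for the seed $(r_0,\hat n_0)\in\Omega$, a unique maximal solution $\hat n(r)$ on a half-open interval $[r_0,r_f)$, lying on $S^2$.

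The main step is to show $(r,\hat n(r))$ stays critical. I would introduce $\vec v(r):=\vec b+2rA\hat n(r)$, $C(r):=\hat n(r)^T\vec v(r)$, and $\vec w(r):=\vec v(r)-C(r)\hat n(r)$, the part of $\vec v$ orthogonal to $\hat n$; criticality of $(r,\hat n(r))$ is precisely $\vec w(r)=0$. Differentiating $\vec w$ in $r$, using the identity $2rA=\Lambda+CI$ and the ODE $\hat n'=\vec m:=\Lambda^{-1}(k-2A)\hat n$ to reduce $2A\hat n+2rA\vec m$ to $k\hat n+C\vec m$ (the same cancellation the pre-proposition calculation performs, but now used to close the loop rather than to guess the feedback), I expect to arrive at the linear homogeneous equation $\vec w'(r)=-\big(\hat n(r)\vec m(r)^T\big)\vec w(r)$, whose coefficient matrix is continuous on $[r_0,r_f)$. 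Since $\vec w(r_0)=0$, uniqueness for linear systems forces $\vec w\equiv0$, i.e. $df_r(\hat n(r))\cdot\vec\epsilon=\vec v(r)^T\vec\epsilon=0$ for all $\vec\epsilon\perp\hat n(r)$, which is the claimed conclusion.

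For the two global threads I would take $r_0=0$ and $\hat n_0=\pm\hat b$ (assuming $\vec b\neq0$): these are critical for $f_0$ since there $\vec v=\vec b$ is parallel to $\hat n$, and $\Lambda(0,\pm\hat b)=\mp|\vec b|\,I$ is invertible with $\hat n^T\Lambda^{-1}\hat n=\mp|\vec b|^{-1}\neq0$, so (1)--(2) hold at the seeds and the previous paragraphs apply with $r_0=0$. When (1)--(2) hold on all of $[0,1]\times S^2$, $\Phi$ is a smooth (time-dependent) vector field on the compact manifold $S^2$ over the whole parameter range, so a maximal solution cannot leave $S^2$ and therefore extends to all of $r\in[0,1]$; running this from $\hat n_{0+}=\hat b$ and from $\hat n_{0-}=-\hat b$ yields the maximizing and minimizing threads $\hat n_\pm(r)$.

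The part I expect to be delicate is the criticality-propagation step: the calculation preceding the proposition only establishes that \emph{if} $\hat n$ is critical then the $r$-derivative of $df_r(\hat n)\cdot\vec\epsilon$ is a multiple of $\hat n^T\vec\epsilon$, which looks circular; recasting criticality as the vanishing of $\vec w$ and exhibiting the homogeneous linear ODE it satisfies is what turns this into a genuine proof. A secondary nuisance is $r=0$, where $\Lambda$ collapses to a scalar multiple of the identity, so hypotheses (1)--(2) at the seeds $(0,\pm\hat b)$ have to be checked by hand (as above), and the literal requirement that (1)--(2) hold on \emph{all} of $[0,1]\times S^2$ is best read as holding along the threads themselves.
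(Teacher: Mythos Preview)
Your approach follows the same skeleton as the paper's: smoothness of the right-hand side under hypotheses (1)--(2), standard ODE theory for local existence and uniqueness, and compactness of $[0,1]\times S^2$ for the global threads. The paper's proof is terser on the criticality-propagation step---it simply asserts that ``the vanishing of the derivative has already been shown,'' referring to the pre-proposition calculation. You are right that that calculation uses criticality (to write $\vec b+2rA\hat n=C\hat n$) as a hypothesis, so by itself it is not a proof that criticality persists along the flow; your device of setting $\vec w:=\vec b+2rA\hat n-C\hat n$ and deriving the homogeneous linear equation $\vec w'=-(\hat n\,\vec m^{\,T})\vec w$ (which your reduction $2A\hat n+2rA\vec m=k\hat n+C\vec m$ indeed yields, once combined with $\hat n^T\vec m=0$ and $\vec m^T\vec v=\vec m^T\vec w$) cleanly converts the heuristic into a rigorous invariance argument. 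Your tangency check $\hat n^T\Phi=0$ and your caveat that (1)--(2) cannot literally hold on all of $[0,1]\times S^2$ (since $\Lambda(0,\hat n)=-(\hat n\cdot\vec b)I$ is singular on the circle $\hat n\perp\vec b$) are likewise points the paper leaves implicit; neither affects the validity of the argument along the two threads seeded at $\pm\hat b$.
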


\begin{proof}
The local existence and uniqueness result is an application of  standard theory of ODE's on manifolds \cite{LangBook}. Local existence requires differentiability of the RHS, which clearly holds given the two conditions. The vanishing of the derivative has alreadly been shown. The global existence result holds because $[0,1] \times S^2$ is compact.
\end{proof}

Both of the conditions on $\Lambda(r,\hat{n})$ sometimes fail, which we shall discuss in section V.  The first is always tractable, while analysis of the second requires the solution of a polynomial. Apart from these special cases however, we now have two functions $\hat{n}_\pm(r)$ with $\hat{n}_\pm (0)=\pm\hat{b}$ and $\frac{d}{dr}\hat{n}_\pm = \vec{m}(\hat{n}_\pm)$ that optimize (at least locally) $f_r(\hat{n})$ for each $r$. 

\section{Trajectory planning}

We now have a feedback $\vec{m} = \frac{d\hat{n}}{dr}$ that ensures a trajectory remains on a critical point as it moves inward to or outward from the completely mixed orbit. Our control variable however is the Hamiltonian, so to plan a trajectory (for example, to move from the completely mixed state $r=0$ outwards), we must know how to recover an appropriate Hamiltonian from a given feedback. 

We know previously that $\frac{d\hat{n}}{dt} = -\frac{1}{r}\vec{b}_\perp + \vec{h}\times\hat{n} - (A\hat{n})_\perp$. We also know $\frac{d\hat{n}}{dr} = \vec{m}$. If we write $\frac{d\hat{n}}{dt} f(r, \hat{n}) = \frac{d\hat{n}}{dt}$, we get 
\begin{align}
\vec{m}(r,\hat{n})f(r,\hat{n})& = \vec{h} \times\hat{n} - \frac{1}{r}\vec{b}_\perp - (A\hat{n})_\perp\\
\vec{h} &= \hat{n} \times\left(\vec{m}(r,\hat{n})f(r,\hat{n}) + \frac{1}{r}\vec{b}_\perp + (A\hat{n})_\perp \right).
\end{align}
Note that the term $\frac{1}{r}\vec{b}_\perp$ is well-behaved since $\vec{b}\times\hat{n}$ is zero at $r=0$ and $O(r)$ in the vicinity. Also note that an arbitrary component parallel to $\hat{n}$ can be added to $\vec{h}$. In short, to attain a desired trajectory $\hat{n}(r)$ using its associated feedback $\vec{m}(r,\hat{n})$, one should apply a radially varying Hamiltonian in the form
\begin{align}
\vec{h}(r) = c(r)\hat{n}(r) + \hat{n}(r) \times \left(\vec{m}(r,\hat{n}(r)) + \frac{1}{r}\vec{b} + A\hat{n}(r) \right),
\end{align}
where $c(r)$ is arbitrary.

One might think the piece including $\vec{m}$ may blow up if $\Lambda$ becomes non-invertible, but we will see that this is not the case. There are cases where $\vec{m}$ does blow up, but this does not occur on the main threads $\hat{n}_\pm(r)$ that arise at $r=0$. Instead, this occurs when alternate threads arise at some $r > 0$. We discuss these possibilities in the next section. 

\section{Special cases}

We now consider the instances in which the feedback may not be well-defined, either due to (1) $\Lambda$ losing invertibility, or (2) the denominator in the definition of $k$ vanishing. The matrix $\Lambda$ loses invertibility if and only if the constant $C$ equals one of the eigenvalues $a_j$ of $2rA$. But at any critical point, we have:
\begin{align}
\vec{b} + 2rA\hat{n} &= C\hat{n} \\
\vec{b} &= -\Lambda\hat{n}.
\end{align}
In other words, degeneracy of $\Lambda$ implies that $\vec{b}$ must be in the image of $\Lambda$, which does not have full dimension. We separate the cases based on the multiplicity of the eigenvalue of $2rA$ in question. \\

\textbf{(1a) $C$ is a triple eigenvalue of $2rA$.}\\ 
This case is largely trivial. If $A$ is a multiple of the identity $aI_2$, where $a$ is the only eigenvalue of $A$, and $C = 2ra$, then $\Lambda = 0$, and therefore $\vec{b}=0$. We have then:
\begin{align}
f_r(\hat{n}) &= -2ar \\
df_r (\hat{n})\cdot\vec{\epsilon} &= 2ra \hat{n}^T\vec{\epsilon} = 0.
\end{align}
Therefore all possible $\hat{n}$ are critical points of $f_r$ due to the rotational symmetry. No optimization is needed since $f_r$ is a constant function. \\

\textbf{(1b) $C$ is a double eigenvalue of $2rA$.}\\
In this case, we find a plane of critical points that intersect one of the main threads $\hat{n}_\pm(r)$. Additionally, this case covers the well-studied phase- and amplitude-damping channels.

Let $\{e_j\}$, $j = 1,2,3$, be an eigenbasis of $A$, where the eigenvalues $a_1$ and $a_2$ are equal, and $a_3\ne a_1$. It follows that for $C=2ra_1$, we need $\hat{b}=e_3$. We know critical points satisfy $\vec{b}+2rA\hat{n} = C\hat{n}$, which gives:
\begin{align}
2ra_1\hat{n}_j &= C\hat{n}_j, \hspace{5pt} j = 1,2\\
b_3 + 2ra_3\hat{n}_3 &= C\hat{n}_3.
\end{align} 
There are two solutions. If $\hat{n}_1 = \hat{n}_2 = 0$, then $\hat{n}_3 = \pm 1$. This solution corresponds to the main threads $\hat{n}_\pm(r) = \pm \textrm{sgn}(b_3)$. In this case, no feedback is needed, since eq. (\ref{rawfb}) is satisfied for $\vec{m}=0$. As it happens, $C = \pm b_3 +2ra_3$, which means that $\Lambda_\pm = \mp b_3I + 2r(a_1-a_3) \cdot\textrm{diag}(1, 1, 0)$. Invertibility is lost at $r = \frac{|b_3|}{2|a_1-a_3|}$, but yet the feedback solution $\vec{m}=0$ is still valid there.

An alternate solution exists. If $C=2a_1r$, then $\hat{n}_1$ and $\hat{n}_2$ are free. In this case, $n_3 = r\hat{n}_3 = \frac{b_3}{2(a_1-a_3)}$. So there is a plane of critical points, that happens to orthogonally intersect one of the main threads at exactly the point where the corresponding $\Lambda$ loses invertibility. On this critical plane, $\Lambda$ is everywhere non-invertible: it is equal to $\textrm{diag}(0, 0, 2r(a_3-a_1))$. Yet any $\vec{m}$ with $m_3 = \frac{b_3}{2r^2(a_3-a_1)}$ satisfies eq. (\ref{rawfb}). This solution allows a valid $\vec{m}$, unless $\hat{n}$ approaches the intersection point with the main thread. Since $\vec{m}$ and $\hat{n}$ must be perpendicular, some combination of the components $m_1$ and $m_2$ must grow unbounded, and at the intersection point itself, there is no solution since perpendicularity forces $m_3=0$. This also implies that one cannot switch from the main thread to the alternate plane. 

We can apply this to a combination of phase- and amplitude-damping channels \cite{Preskill}. A phase-damping channel uses a Lindblad operator in the form $L_z := \sqrt{\gamma_z}\sigma_z$, while the amplitude-damping channel uses Lindblad operators in the form $L_\pm := \sqrt{\gamma_\pm}\sigma_\pm$, where $\sigma_\pm = \frac{\sigma_x \mp i\sigma_y}{2}$. In this case, we get the following parameters: $a_1=\gamma_z$, $a_2 = a_3 = \frac{\gamma_++\gamma_-}{4}$ and $\vec{b}= \langle \frac{\gamma_+ -\gamma_-}{2}, 0, 0\rangle$. Since $b_2=b_3=0$ and $a_2=a_3$, we have a plane of critical points at $e_1\cdot \vec{n} = \frac{\gamma_+-\gamma_-}{4\gamma_z - \gamma_+-\gamma_-}$. \\

\textbf{(1c) $C$ is a single eigenvalue of $2rA$.}\\
This case is similar to the preceding, except the plane of critical points is now a line of critical points. Let $C = 2ra_1$ and $a_1\ne a_2, a_3$, we still have $b_1=0$. We get:
\begin{align}
2a_1r\hat{n}_1 &= C\hat{n}_1 \\
b_j + 2a_jr\hat{n}_j &= C\hat{n}_j, \hspace{5pt} j= 2,3
\end{align}
As before, choosing $\hat{n}_1 = 0$ allows us to recover the main threads $\hat{n}_\pm(r)$.  $C$ will equal the offending value $2ra_1$ at $r=\sqrt{\frac{b_2^2}{4(a_1-a_2)^2}+\frac{b_3^2}{4(a_1-a_3)^2}}$. The fact that $\Lambda$ loses invertibility here does not affect the feedback, because the direction of degeneracy happens to be orthogonal to $\hat{n}$: that is, $\Lambda$ is degenerate in the $e_1$-direction. Therefore the feedback can still be found.

An alternate thread can be found by setting $C = 2ra_1$ and letting $\hat{n}_1$ run free. In this case, we find that $n_j = r\hat{n}_j = \frac{b_j}{2(a_1-a_j)}$, so this thread is orthogonal to the $\hat{n}_1 = 0$ plane. While $\Lambda$ is non-invertible on this thread, we can find a solution to eq. (\ref{rawfb}): $m_1$ can be free, while $m_2=-n_2$ and $m_3=-n_3$. To satisfy perpendicularity, however, we require $m_1 = \frac{n_2^2+n_3^2}{n_1}$. This clearly blows up as the thread crosses the $n_1=0$ plane.

Note the alternate thread may or may not intersect the main threads, but from our simulations, we observe that intersections only seem to occur when $a_2=a_3$.  \\

\textbf{(2) $k$ is not well-defined.}\\
The above cases are simpler than a generic system, since a component of $\vec{b}$ vanishes in the natural co-ordinates of $A$, which reduces the degree of equation (\ref{six}) from six to four (or two). However, it may still happen that $\Lambda$ is invertible, yet $\hat{n}^T\Lambda^{-1}\hat{n} = 0$. In this case, the algebra required to find such a location still leads to a degree-six polynomial. We essentially have five unknowns: $r$, $C$ and the three components of $\hat{n}$. These obey five equations: 
\begin{align}
b_j+2ra_j\hat{n}_j &= C\hat{n}_j \\
|\hat{n}| &= 1 \\
\sum_j \frac{\hat{n}_j^2}{2ra_j-C}  &= 0, 
\end{align}
where the final equation is the failure of condition (2). We can eliminate the variable $r$ and the second equation by working with the components of $\vec{n}$ instead of $\hat{n}$. Furthermore, working with $\mu = \frac{C}{2r}$ yields $n_j = \frac{b_j}{2(a_j-\mu)}$. Substitution into the third equation gives $\sum_j \frac{b_j^2}{8(a_j-\mu)^3} = 0$. This yield the sixth-degree polynomial equation:
\begin{align}
b_1^2 (a_2-\mu)^3(a_3-\mu)^3 + b_2^2 (a_3-\mu)^3(a_1-\mu)^3  \nonumber \\
 + b_3^2 (a_1-\mu)^3(a_2-\mu)^3  = 0.
\end{align}
One can find solutions numerically for $\mu$ and the corresponding $\vec{n}$ follows easily. Such a solution corresponds to an alternate thread of critical points: when such a thread becomes tangent to a concentric sphere in the Bloch ball, the feedback $\vec{m}$ becomes infinite, which is why the feedback expression fails. It is possible to plot such an alternate thread by locating an initial point away from where the feedback fails, and then using the feedback in either direction. To locate such a point, one needs to solve the degree-six polynomial (\ref{six}). We will not do this in our examples, as it contradicts the spirit of this paper. In higher dimensions, the algebra would not be tractable, therefore we must make peace with the fact that the feedback works only to find critical points locally. 

\section{Separation of the Bloch Ball}

Besides finding the optimal points of $f_r$, it is also an interesting question to locate the zeros of $f_r$. It turns out that for $\vec{b}\ne 0$, there is a ``chimney" region in the Bloch ball, inside of which the purity ``rises". That is, $f(\hat{n},r) > 0$, and outside of which $f(\hat{n} , r) < 0 $. If we want $r$ to increase, we must steer inside of this region. We can locate the ``wall" of this chimney by using another  feedback expression. We know that at $r=0$, we have $f_0(\hat{n})=0$ for $\hat{n}\cdot\vec{b}=0$, which has a $S^1$-homeomorphic set of solutions, say $\hat{c}_\theta$, with $\theta\in [0,2\pi)$ being an angle parameter. We want to take such a solution, use it as an inital condition, and find a feedback to ensure $f_r(\hat{n})=0$ as $r$ increases. To do this, we differentiate $f_r$ with respect to $r$, with $\vec{m}=\frac{d\hat{n}}{dr}$:
\begin{align}
\frac{d}{dr} \left(f_r(\hat{n}) \right) &= \frac{\partial}{\partial r}f_r(\hat{n}) +\nabla_{\hat{n}}f_r(\hat{n})\cdot \vec{m} \\
&= \hat{n}^TA\hat{n} - \textrm{tr}(A) + (\vec{b} + 2rA\hat{n})^T \vec{m} \\
(\vec{b} + 2rA\hat{n})^T \vec{m} &=  \textrm{tr}(A)  - \hat{n}^TA\hat{n} .
\end{align}
To satisfy this equation, as well as $\vec{m}\cdot\hat{n}=0$, define $\vec{v} = \vec{b}+2rA\hat{n}$. A possible solution is:
\begin{align}
\vec{m} = \frac{\textrm{tr}(A)  - \hat{n}^TA\hat{n}}{|\vec{v}|^2 - (\hat{n}\cdot\vec{v})^2}\left(\vec{v} - (\hat{n}\cdot\vec{v})\hat{n }\right). \label{chimney}
\end{align}
This solution is not unique: for a given $r$, there is a continuum of zeros of $f_r$, at least until the chimney terminates. If we kept $r$ fixed, and moved along this continuum with $\hat{n} = \hat{n}(t)$, a feedback $\frac{d\hat{n}}{dt} \propto \hat{n}\times\vec{v}$ would ensure $f_r(\hat{n}(t)) = 0$. For our feedback, we will thus keep the component parallel to $\hat{n}\times\vec{v}$ zero, so that we capture only the necessary motion of $\hat{n}$.

Of course, this feedback will terminate for some $r\le 1$, since $f_r$ cannot be positive at that radius. The terminating condition is $\hat{n}\cdot\vec{v} = |\vec{v}|$, which matches the critical point condition. At such a point, the feedback becomes infinite. Thus the point on the chimney furthest from the origin (which we call the \emph{apogee}) is a critical point, either on the maximizing thread $\hat{n}_\pm$, or possibly on one of the alternate threads. In the following section, we will show examples of both possibilities. 

Finally, it should be noted that the chimney does have an analytic solution. If one uses $rf_r(\hat{n}) = 0$ and substitutes $r^2 = \vec{n}^2$, we obtain an elllipsoid in the co-ordinates of $\vec{n}$:
\begin{align}
\sum_j \tilde{a}_j \left(n_j - \frac{b_j}{2\tilde{a}_j}\right)^2 = \sum_j \frac{b_j^2}{4\tilde{a}_j},
\end{align}
where $\tilde{a}_1 := a_2 + a_3$ and so forth. In general however, the ellipsoid center and axes are not aligned with the axes of $A$, other than intersecting the center of the Bloch ball. So the intersection of the ellipsoid with concentric spheres, which is what we are interested in, will not have a clean analytic expression. In fact the intersection may not even be connected: this is what happens when there is more than one apogee.

\section{Examples}

\begin{figure}
\includegraphics[width=0.9\columnwidth]{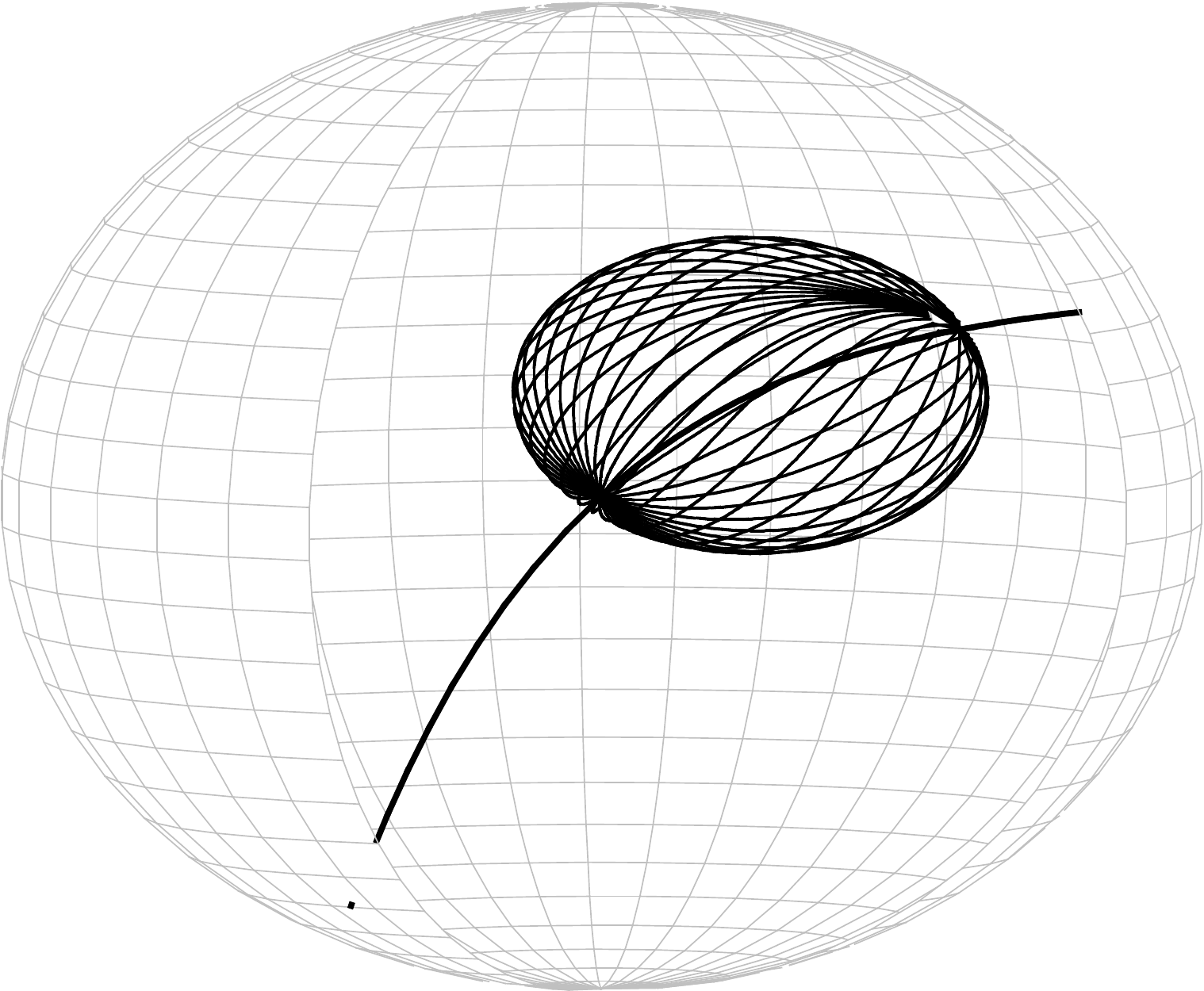}
\caption{Optimizing threads and chimney for $A=\textrm{diag}(100, 57, 39)$ and $b=\langle 29,67, 61\rangle$. Inside the chimney, the dynamics drives the Bloch vector outwards, whereas Bloch vectors outside have negative radial velocity. The thread represents points that have optimal radial velocity. The piece from the center to the lower left has minimal velocity, while the piece in the upper right has maximal velocity.} 
\label{fig1}
\end{figure}

The feedbacks (\ref{feedback}) and (\ref{chimney}) can be used to form ODE's $\frac{d}{dr}\hat{n} = \vec{m}(r, \hat{n})$ with initial conditions $\hat{n}_\pm (0) = \pm\hat{b}$ or $\hat{n}_\theta(0) = \hat{c}_\theta$ that can be solved numerically. The chimney can be plotted by discretizing the circle of initial points and calculating threads on the chimney. We have implemented this using a Runge-Kutta method for Lie groups \cite{Celledonietal2012} which ensures $\hat{n}$ remains normalized. The results are consistent with the preceding analysis. Fig. \ref{fig1} shows a typical example. The interval $r\in[0,1]$ has been discretized into intervals of length $\frac{1}{1000}$. The maximizing thread curls towards the upper right, and the minimizing thread to the lower left. We have estimated the error by calculating the component of $\vec{b}+2rA\hat{n}$ perpendicular to $\hat{n}$, and we can report that this error does not exceed $3\times 10^{-10}$ for either thread in this example. Typically a discretization of $\delta r = \frac{1}{1000}$ is sufficient to achieve precision of such order. 

The chimney is also plotted by discretizing the circle of initial points into thirty-six. It is important to note that the algorithm is not capable of finding the apogee of the chimney, since the ODE blows up there. One must stop the algorithm when the error exceeds a certain threshold. For the chimney we estimate the error by calculating $f_r(\hat{n}_\theta(r),r)$, and we can report for this example the error does not exceed $2.5\times 10^{-6}$. The threshold we used was $1\times 10^{-3}$. For this example, the chimney threads finish near the maximizing thread, so we can infer that their termination point lies on this thread. The termination point can be calculated by finding the zero of $f(\hat{n}_+(r),r)$.  There are no alternate threads for this example.

In fig. \ref{fig2}, we have an example with an alternate thread. Since $b_1=0$ and $a_2=a_3$, we know there will be a line of critical points that intersects the maximizing thread. The alternate thread of critical points is horizontal in the plot, with thinner line-width. When we plot the chimney we can see that all but two of the thirty-six chimney threads terminate on the alternate thread, rather than the maximizing thread. Below the Bloch ball we also plot $f(\hat{n}_\pm(r),r)$ with thick line-width and $f(\hat{n}_a(r),r)$ with thin line-width, where $\hat{n}_a$ is the alternate thread. We can see that the maximizing thread only gives a local maximum for radii at which the alternate thread exists, and the alternate thread provides the global maximum.

\begin{figure}
\includegraphics[width=0.9\columnwidth,trim=0cm -2cm 0cm 0cm]{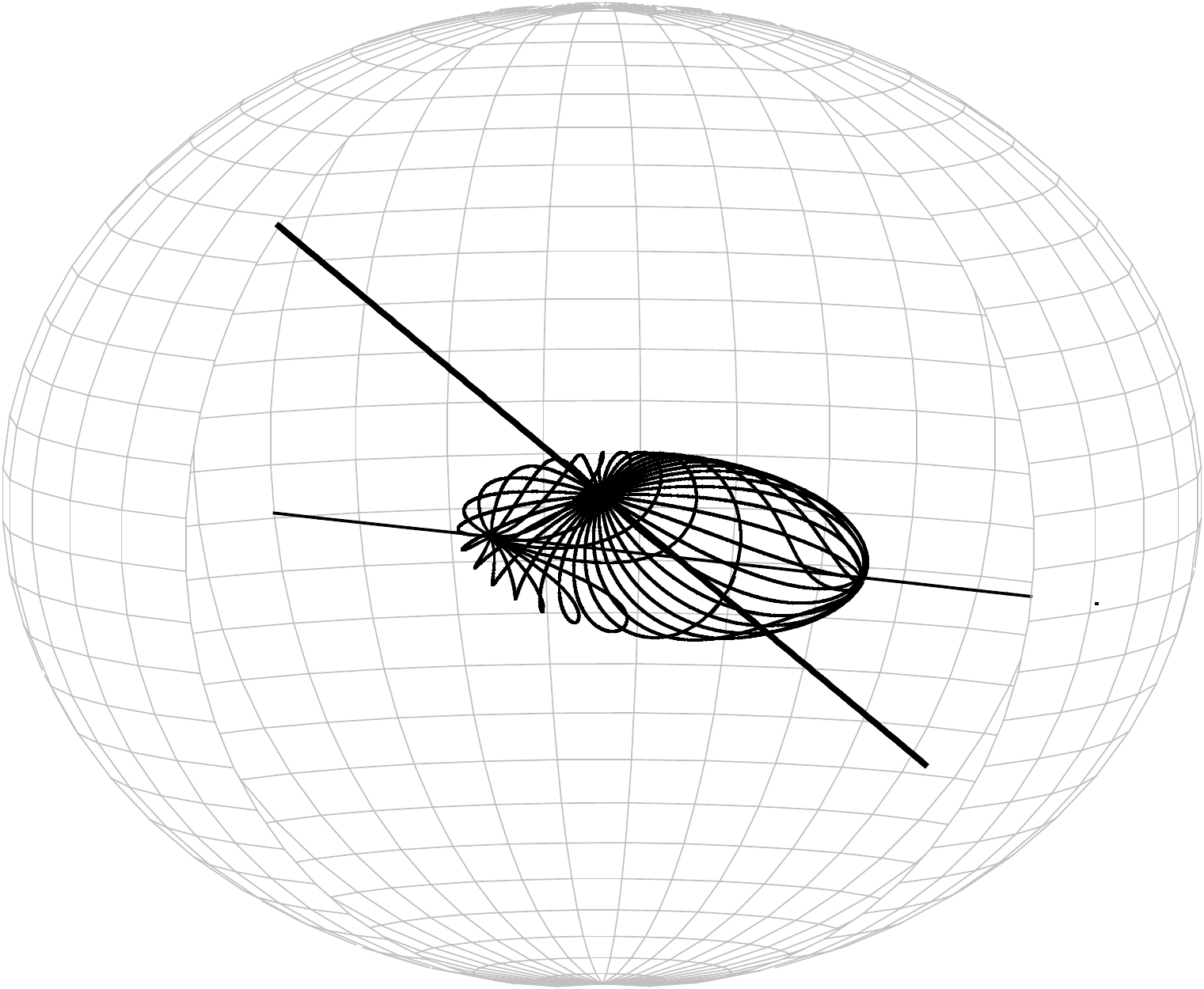}
\includegraphics[width=0.9\columnwidth]{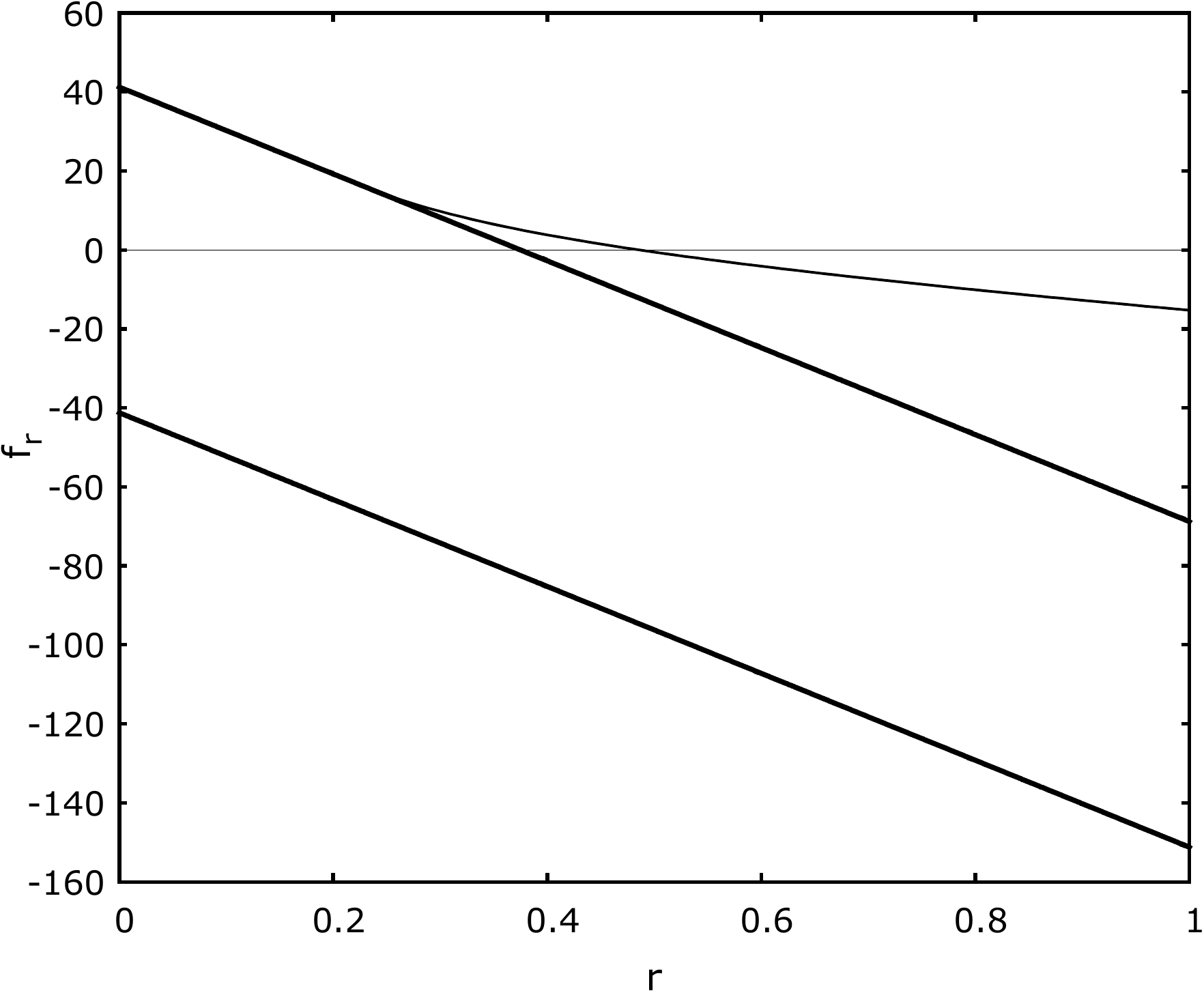}
\caption{(Top) The main threads are represented by the thick line, while the transverse line represents an alternate thread. $A=\textrm{diag}(100, 10, 10)$ and $b=\langle 0,32, -26 \rangle$. (Bottom) $f(\hat{n}_\pm, r)$ and $f(\hat{n}_a, r)$ for the same system. Thick lines represent radial velocity of the main threads, while the thinner line represents the radial velocity of the alternate thread. Clearly, the alternate thread is the global maximum on its domain. } 
\label{fig2}
\end{figure}

In fig. \ref{fig3}, we have another example with an alternate thread. This time $b_2=0$ and $a_1\ne a_3$, and and we see the line of critical points does not intersect the optimizing threads. When we plot $f(\hat{n}_\pm(r),r)$ and $f(\hat{n}_a(r),r)$, we can see the alternate thread does not provide a global optimum, and the optimizing threads provide global optima for all $r$.

\begin{figure}
\includegraphics[width=0.9\columnwidth,trim=0cm -2cm 0cm 0cm]{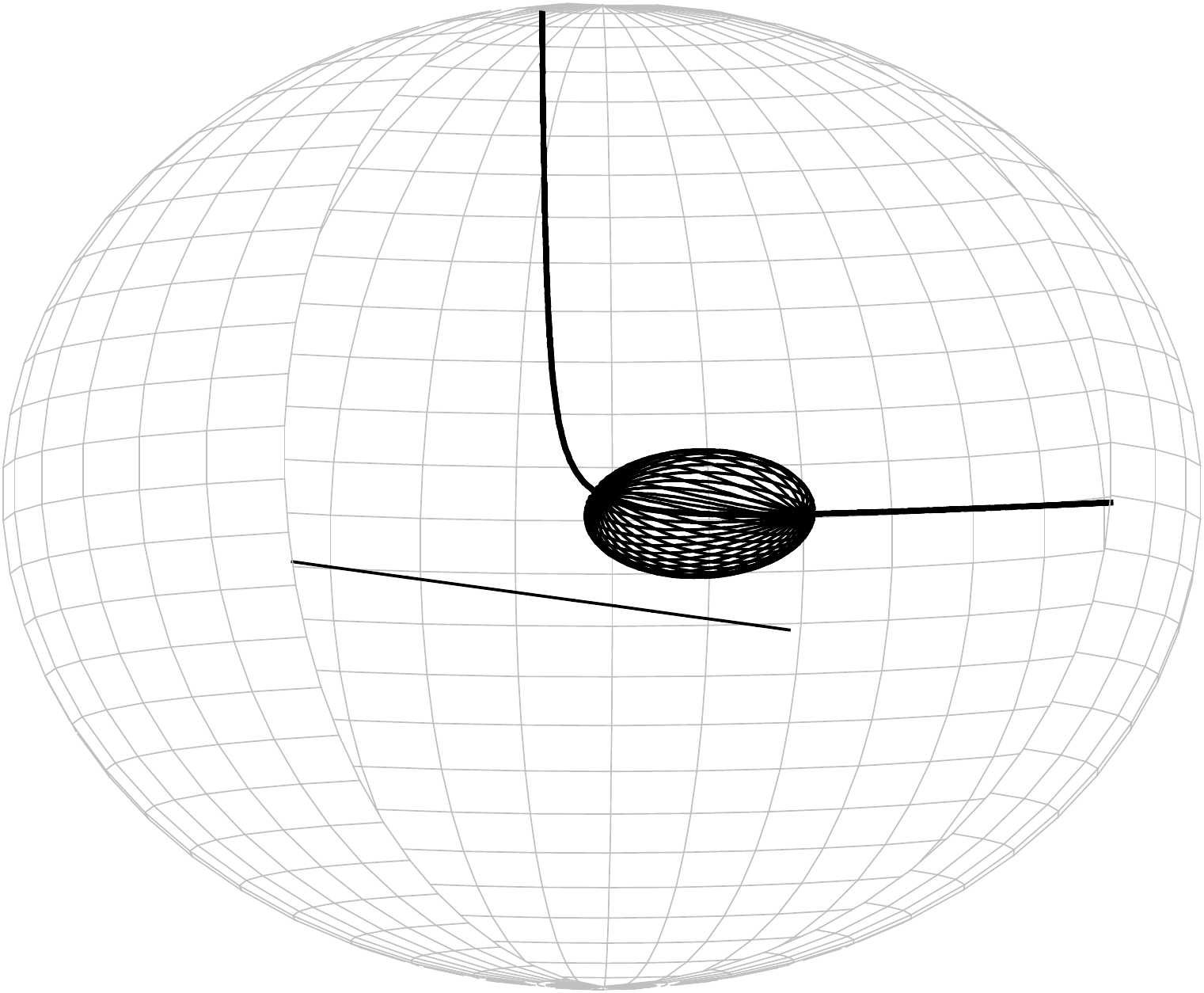}
\includegraphics[width=0.9\columnwidth]{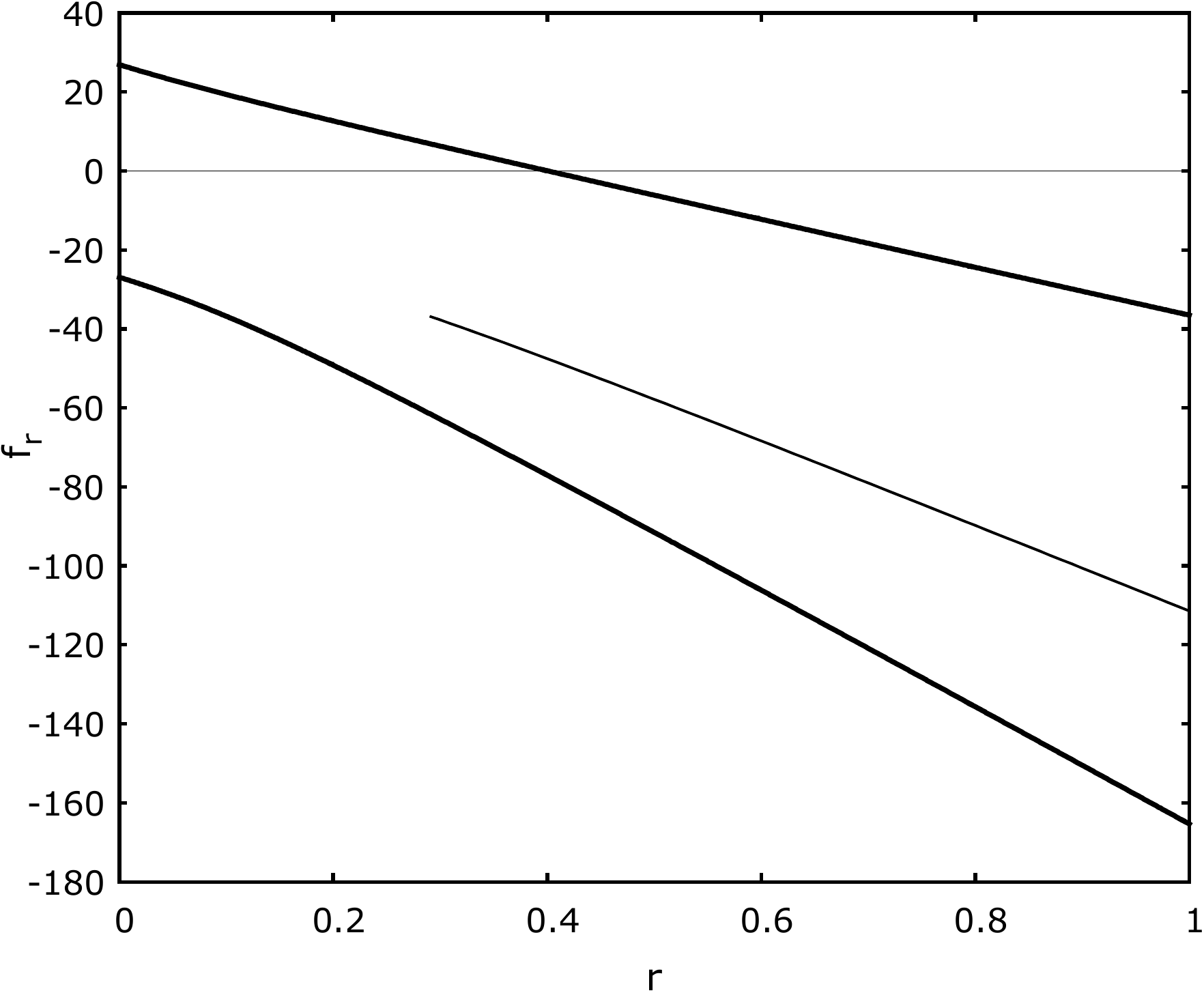}
\caption{(Top) The main threads that pass through the chimney are shown in bold, while an alternate thread also exists. There is no intersection, as $a_2\ne a_3$. $A=\textrm{diag}(100, 50, 10)$ and $b=\langle 23,0, -14 \rangle$. (Bottom) $f(\hat{n}_\pm, r)$ and $f(\hat{n}_a, r)$ for the same system. The thinner line represents the radial velocity of the alternate thread. Clearly the alternate thread is not a global optimum.} 
\label{fig3}
\end{figure}

In fig. \ref{fig4}, we have an example where $b_j\ne 0$, and yet there is still an (unshown) alternate thread. While nineteen of the thirty-six chimney threads terminate on the maximizing thread, the remaining seventeen clearly terminate elsewhere, and so the chimney has a second apogee. In fact, there is an alternate thread that begins inside the chimney and exits at this hole (there is another exit point that does not serve as a termination point, because it is a saddle point). In keeping with the spirit of this paper, we have not attempted to plot this alternate thread or determine whether it provides a global optimum. We can report that the termination point on the maximizing thread is at a larger radius ($r\cong 0.748$) than the alternate termination point ($r\cong 0.649$).

We have however decided to estimate how often a system has an alternate thread. We have simulated 100,000 random systems in the following way: the largest eigenvalue of $A$ was fixed to be $a_1=100$. The remaining two were chosen to be uniform on the interval $[0,100]$. To randomize $\vec{b}$ we know that, due to the positive-definiteness of the GKS matrix \cite{GoriniKossakowskiSudarshan76}, it obeys the inequality (\ref{ineq}).
Thus the vector $\vec{b}^* = \langle \frac{b_1}{2\sqrt{a_2a_3}}, \frac{b_2}{2\sqrt{a_1a_3}}, \frac{b_3}{2\sqrt{a_1a_2}}\rangle$ must lie in a ball of radius one. We impose a uniform distribution on this ball, choose a $\vec{b}^*$ and calculate $\vec{b}$. With this randomization, we conducted 100,000 simulations that yielded 59,830 systems without an alternate thread, 30,811 with one alternate thread and 9,359 with two. 

\begin{figure}
\includegraphics[width=0.9\columnwidth]{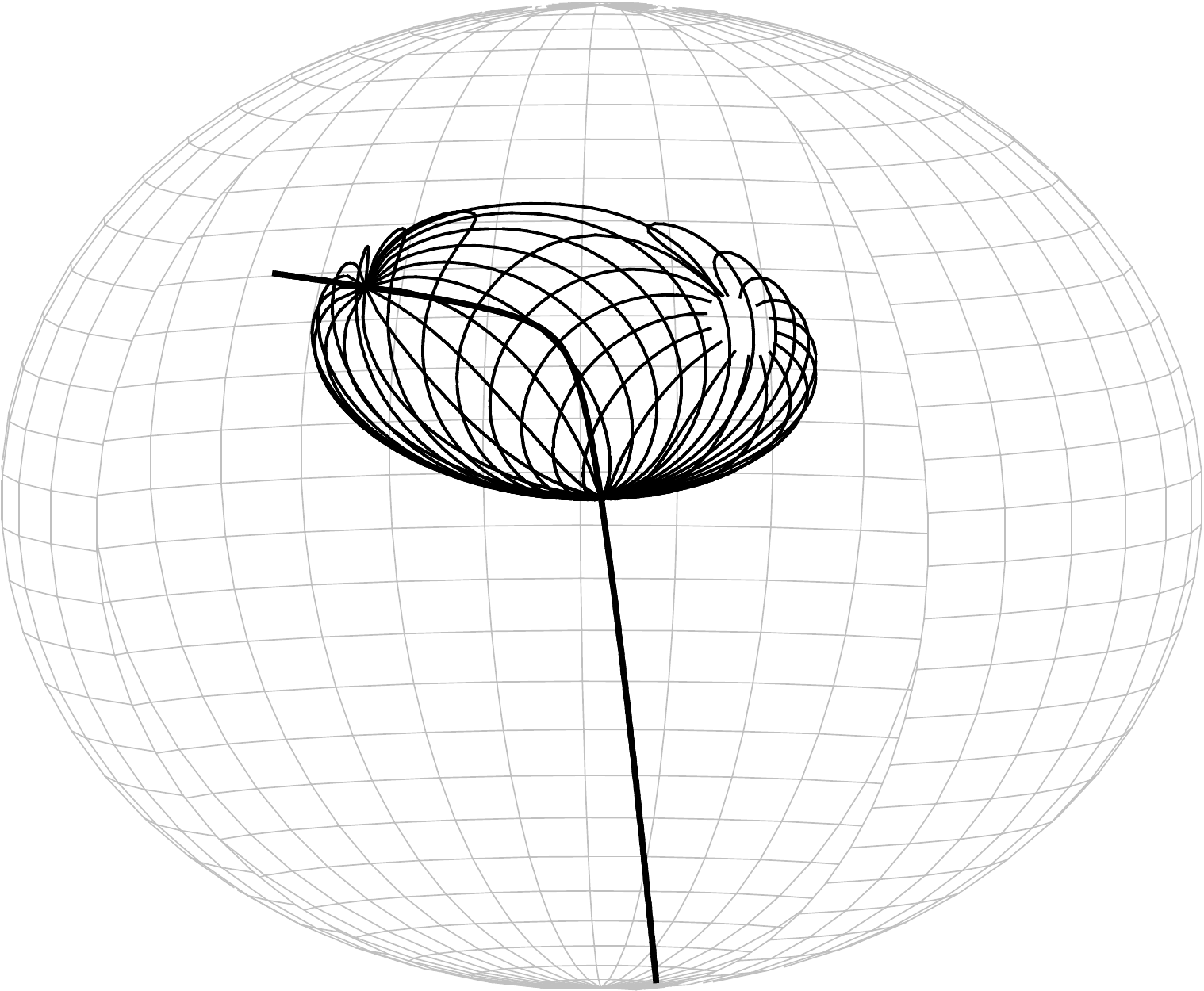}
\caption{A system with main threads, and and unshown alternate thread. The main threads optimize radial velocity locally. However, the fact that chimney lines approach a point that is clearly not on the main threads indicates there is an alternate thread that also locally maximizes radial velocity.  $A=\textrm{diag}(100, 16, 11)$ and $b=\langle -3, -8, 68 \rangle$. } 
\label{fig4}
\end{figure}

\section{Conclusions and Future Work}

We have demonstrated that it is possible to derive a feedback equation that maintains critical points and zeros as one transitions between quantum orbits. The behavior at the completely mixed state is easy to analyze: both the critical points and the zeros of the function $f(r,\hat{n}) = \frac{dr}{dt}$ are trivial to compute at that orbit. As one increases $r$, these zeros and critical points can be preserved. The critical points form two threads: one of which maximizes $\frac{dr}{dt}$ locally, the other minimizes. If one has fast controllability and one wants to optimize the speed at which the state moves between orbits, the system can be steered to either of these threads, depending on the desired direction. The feedback expression also yields an expression for a Hamiltonian that keeps the system on the thread. 

It is important to note that this mechanism only ensures that the optima are local. There are systems where other optima emerge as one moves away from the completely mixed state. Sometimes such an alternate optimum is also the global optimum, sometimes not. 

The intention of this paper is to demonstrate an approach that can be used in higher dimensions to analyze controllability. Because the Lindblad term $\mathcal{L}_D(\rho)$ in dimension $n$ reduces to $\frac{1}{n}\sum_m [L_m,L_m^\dagger]$ at the completely mixed state, it can be treated analytically, using the Schur-Horn theorem \cite{Schur}\cite{Horn}. Hopefully, one can study the critical points and zeros of $\mathcal{L}_D$ away from the complete mixed state by using a feedback similiar to the method used in this paper. Instead of $\hat{n}$, one considers the flag formed by the eigenstates of $\rho$. Such a flag can be made to vary continuously by applying a skew-Hermitian operator: its tangent space is a subspace of the Lie algebra $\mathfrak{su}(n)$. It is reasonable to assume that one can achieve a feedback expression on the tangent space that preserves critical points and zeros. A paper examining these ideas is in preparation.


\appendix
 \section{Derivation of the Bloch vector ODE}

A quantum density operator $\rho$ is a trace-one, positive semi-definite operator. On an $n=2$ Hilbert space, we can write:
\begin{align}
\rho = \frac{1}{2}\left(I_2 + \sum_{j=x,y,z}n_j \sigma_j\right),
\end{align} 
where $\sigma_j$ are the Pauli matrices:
\begin{align}
\sigma_x = \left(\begin{array}{cc}
0&1\\
1&0
\end{array}\right), \hspace{0.1in}
\sigma_y = \left(\begin{array}{cc}
0&-i\\
i&0
\end{array}\right),\hspace{0.1in}
\sigma_z = \left(\begin{array}{cc}
1&0\\
0&-1
\end{array}\right),\end{align}
which obey the following relations:
\begin{align}[\sigma_j, \sigma_k] = 2i\epsilon_{jkl}\sigma_l \\
\{\sigma_j, \sigma_k\} =2\delta_{jk}I_2.
\end{align}
It can be checked that the purity $Tr(\rho^2)$ is equal to the magnitude of the Bloch vector $r:=|\vec{n}|$. It can also be shown that the eigenvalues of $\rho$ are $\lambda_\pm := \frac{1\pm r}{2}$. Each unitary orbit $\{ U \rho U^\dagger: U\in U(2)\}$ corresponds to one value of $r\in [0,1]$. 

If for $r\ne 0$, we write $\hat{n}:= \vec{n}/r$, the eigenvectors of $\rho$ are $|\psi_\pm\rangle:= \frac{\hat{n}_z\pm1}{2}|0\rangle + \frac{\hat{n}_x+i\hat{n}_y}{2}|1\rangle$. It follows that the set $\{|\psi_+\rangle\}$ can be identified with the set $\{\hat{n}\}$, which of course is $S^2$.

Using the above identification, we can transform the Lindblad equation to an ODE on $\mathbb{R}^3$. If we set $H = h_0I_2 + \frac{1}{2}\sum_j h_j\sigma_j$, the Hamiltonian piece becomes:
\begin{align}
[-iH, \rho] &= \frac{1}{4} \sum_{j,k} h_jn_k[-i\sigma_j,\sigma_k] = \frac{1}{2}\sum_{j,k} h_jn_k \epsilon_{jkl}\sigma_l \\
&= \frac{1}{2}\sum_l \left(\vec{h}\times\vec{n}\right)_l\sigma_l .
\end{align}
We can assume the Lindblad operators are traceless, as any traced part can be absorbed into the Hamiltonian \cite{BreuerPetruccioneBook}. In this case, we can write $L_m = \sum_{j=x,y,z}l_{j,m}\sigma_j$, where $l_{j,m}\in\mathbb{C}$. We have: 
\begin{align}
\mathcal{L}_D(\frac{I_2}{2}) &= \frac{1}{2}\sum_{j,k,m}l_{j,m}\bar{l}_{k,m}[\sigma_j,\sigma_k] = \sum_{j,k,m}l_{j,m}\bar{l}_{k,m} i\epsilon_{jkl}\sigma_l \\
&= \sum_{l} b_l\sigma_l,
\end{align}
where
\begin{align}
\vec{b} &= i \sum_{m} \vec{l}_m \times \vec{\bar{l}}_m.
\end{align}
If all Lindblad operators are Hermitian, $\vec{b}$ vanishes.  This is known as the unital case. 

We also have: 
\begin{align}
\mathcal{L}_D(\sum_l&\frac{n_l\sigma_l}{2} ) = \sum_{j,k,l,m}l_{j,m}\bar{l}_{k,m}\frac{n_l}{4} \left(2\sigma_j\sigma_l\sigma_k \right. \nonumber \\ &\hspace{0.5in} \left.- \sigma_k\sigma_j\sigma_l-\sigma_l\sigma_k\sigma_j\right) \\
&= \frac{1}{4}\sum_{j,k,l,m}l_{j,m}\bar{l}_{k,m}n_l \left(\delta_{jl}\sigma_k+\delta_{kl}\sigma_j-2\delta_{jk}\sigma_l \right) \\
&= \frac{1}{2}\sum_{j,l,m} \frac{l_{l,m}\bar{l}_{j,m} +l_{j,m}\bar{l}_{l,m}}{2} n_l\sigma_j - l_{j,m}\bar{l}_{j,m} n_l\sigma_l \\
&= \frac{1}{2} \sum_{l} \left( A\vec{n}\right)_l \sigma_l - \textrm{tr}(A) n_l\sigma_l,
\end{align}
where $A$ is the symmetric matrix 
\begin{align}
A &:= \frac{1}{2}\sum_{m}(\vec{l}_m\vec{\bar{l}}_m^T+\vec{\bar{l}}_m\vec{l}_m^T).
\end{align} 

Since $\frac{d}{dt}\rho = \frac{1}{2}\sum_{j}\frac{dn_j}{dt}\sigma_j$, we can combine these pieces into the following ODE:
\begin{align}
\frac{d\vec{n}}{dt} = \vec{b} + \vec{h}\times\vec{n} +(A - \textrm{tr}(A))\vec{n}.
\end{align}

\section{Parameter conditions}

Since $A$ is a symmetric matrix, it has a natural orthonormal basis. In this basis, we have six system parameters: the eigenvalues $\{a_j\}$ of $A$ and the elements $\{b_j\}$ of $\vec{b}$, with $j = 1,2,3$. These six parameters must obey two inequalities. 

Consider the matrix $A_* = \sum_m \vec{l}_m\vec{\bar{l}}^T_m$. $A_*$ is the sum of positive semi-definite matrice, and so itself must be positive semi-definite. Moreover, its real part, which equals $A$, must be positive semi-definite, so we have our first inequality:
\begin{align}
a_j \ge 0
\end{align}
Now the imaginary part of $A_*$ relates to $\vec{b}$: $b_1 = i (l_2\bar{l}_3 - l_3\bar{l}_2) = 2\hspace{3pt} \textrm{Im} (A_*)_{32}$ etc. If we write $A_*$ in the natural basis of $A$, and take its determinant, we get:
\begin{align}
\det(A_*) = a_1a_2a_3-\frac{1}{4}(a_1b_1^2+a_2b_2^2+a_3b_3^2).
\end{align}
Since the determinant of a positive semi-definite matrix must be non-negative, we recover the second inequality:
\begin{align}
\vec{b}^TA\vec{b} \le 4 \det(A).
\end{align}

\begin{acknowledgments}
P.R. has been supported by the National Science Foundation and the DFG grant HE 1858/13-1 from the German Research Foundation (DFG). A.M.B. is supported by the National Science Foundation. C.R. is supported by the Natural Science and Engineering Research Council of Canada.
\end{acknowledgments}

\end{document}